\newtheorem{theorem}{Theorem}
\newtheorem{corollary}{Corollary}
\begin{document}

\title{Precoding for the AWGN Channel with Discrete Interference}

\author{\authorblockN{Hamid Farmanbar and Amir K. Khandani}
\authorblockA{
Department of Electrical and Computer Engineering\\
University of Waterloo\\
Waterloo, Ontario, Canada N2L 3G1\\
Email: \{hamid,khandani\}@cst.uwaterloo.ca}}

\maketitle

\footnotetext[1]{This work was supported by Nortel, the Natural
Sciences and Engineering Research Council of Canada (NSERC), and the
Ontario Centres of Excellence (OCE).}

\begin{abstract}
For a state-dependent DMC with input alphabet $\mathcal{X}$ and
state alphabet $\mathcal{S}$ where the i.i.d. state sequence is
known causally at the transmitter, it is shown that by using at most
$|\mathcal{X}||\mathcal{S}|-|\mathcal{S}|+1$ out of
$|\mathcal{X}|^{|\mathcal{S}|}$ input symbols of the Shannon's
\emph{associated} channel, the capacity is achievable. As an example
of state-dependent channels with side information at the
transmitter, $M$-ary signal transmission over AWGN channel with
additive $Q$-ary interference where the sequence of i.i.d.
interference symbols is known causally at the transmitter is
considered. For the special case where the Gaussian noise power is
zero, a sufficient condition, which is independent of interference,
is given for the capacity to be $\log_2 M$ bits per channel use. The
problem of maximization of the transmission rate under the
constraint that the channel input given any current interference
symbol is uniformly distributed over the channel input alphabet is
investigated. For this setting, the general structure of a
communication system with optimal precoding is proposed.
\end{abstract}

\section{Introduction}
Information transmission over channels with known interference at
the transmitter has received a great deal of attention. A remarkable
result on such channels was obtained by Costa who showed that the
capacity of the additive white Gaussian noise (AWGN) channel with
additive Gaussian i.i.d. interference, where the sequence of
interference symbols is known non-causally at the transmitter, is
the same as the capacity of AWGN channel \cite{CosIT83}. Therefore,
the interference does not incur any loss in the capacity. This
result was extended to arbitrary interference (random or
deterministic) by Erez \emph{et al.} \cite{Erez05}. The result
obtained by Costa does not hold for the case that the sequence of
interference symbols is known causally at the transmitter.

Channels with known interference at the transmitter are special case
of channels with side information at the transmitter which were
considered by Shannon \cite{Shan58} in causal knowledge setting and
by Gel'fand and Pinsker \cite{Gel80} in non-causal knowledge
setting.

Shannon considered a discrete memoryless channel (DMC) whose
transition matrix depends on the channel state. A state-dependent
discrete memoryless channel (SD-DMC) is defined by a finite input
alphabet $ \mathcal{X}=\{x_1,\ldots,x_{|\mathcal{X}|}\} $, a finite
output alphabet $ \mathcal{Y} $, and transition probabilities $
p(y|x,s) $, where the state $ s $ takes on values in a finite
alphabet $\mathcal{S}=\{1,\ldots,|\mathcal{S}|\}$.

Shannon \cite{Shan58} showed that the capacity of an SD-DMC where
the i.i.d. state sequence is known causally at the encoder is equal
to the capacity of an \emph{associated} regular (without state) DMC
with an extended input alphabet $ \mathcal{T} $ and the same output
alphabet $\mathcal{Y}$. The input alphabet of the associated channel
is the set of all functions from the state alphabet to the input
alphabet of the state-dependent channel. There are a total of $
|\mathcal{X}|^{|\mathcal{S}|} $ of such functions, where $|.|$
denotes the cardinality of a set. Any of the functions can be
represented by a $ |\mathcal{S}| $-tuple $
(x_{i_1},x_{i_2},\ldots,x_{i_{|\mathcal{S}|}}) $ composed of
elements of $\mathcal{X}$, implying that the value of the function
at state $ s $ is $ x_{i_s}, s=1,2,\ldots,|\mathcal{S}| $.

The capacity is given by \cite{Shan58}
\begin{equation} \label{causal-cap}
C = \max_{p(t)} I(T;Y),
\end{equation}
where the maximization is taken over the probability mass function
(pmf) of the random variable $T$.

In the capacity formula (\ref{causal-cap}), we can alternatively
replace $T$ with $(X_1,\ldots,X_{|\mathcal{S}|})$, where $X_s$ is
the random variable that represents the input to the state-dependent
channel when the state is $s, s=1,\ldots,|\mathcal{S}|$.

This paper is organized as follows. In section \ref{bound}, we
derive an upper bound on the cardinality of the Shannon's associated
channel input alphabet to achieve the capacity. In section
\ref{chanmodel}, we introduce our channel model. In section
\ref{nf}, we investigate the capacity of the channel in the absence
of noise. In section \ref{uniform}, we consider maximizing the
transmission rate under the constraint that the channel input given
any current interference symbol is uniformly distributed over the
channel input alphabet. We present the general structure of a
communication system for the channel with causally-known discrete
interference in section \ref{optprecode}. We conclude this paper in
section \ref{conclude}.

\section{A bound on the Cardinality of the Shannon's Associated
Channel input alphabet}                               \label{bound}
We can obtain the pmf of the channel output $Y$ as
\begin{eqnarray} \label{pY} \nonumber
p_Y(y) & = & \sum_{s \in \mathcal{S}} p_S(s) p_{Y|S} (y|s) \\
\nonumber & = & \sum_{s \in \mathcal{S}} p_S(s) \sum_{x \in
\mathcal{X}} p_{X|S}(x|s) p_{Y|X,S} (y|x,s)\\
& = & \sum_{s \in \mathcal{S}} p_S(s) \sum_{x \in \mathcal{X}}
p_{X_s}(x) p_{Y|X,S} (y|x,s).
\end{eqnarray}
The capacity of the associated channel, which is the same as the
capacity of the original state-dependent channel, is the maximum of
$ I(T;Y) = I(X_1 X_2 \cdots X_{|\mathcal{S}|};Y)$ over the joint pmf
values $ p_{i_1 i_2 \cdots i_{|\mathcal{S}|}} = \mbox{Pr} \{X_1 =
x_{i_1}, \ldots, X_{|\mathcal{S}|} = x_{i_{|\mathcal{S}|}}\}$, i.e.,
\begin{equation} \label{capg}
C = \max_{p_{i_1 i_2 \cdots i_{|\mathcal{S}|}}} \hspace{5pt}I(X_1
X_2 \cdots X_{|\mathcal{S}|};Y).
\end{equation}
The mutual information between $ T $ and $ Y $ is the difference
between the entropies $ H(Y) $ and $ H(Y|T) $. It can be seen from
(\ref{pY}) that $p_Y(y)$, and hence $H(Y)$, are uniquely determined
by the marginal pmfs $\{p_{X_{s}}(x_i)\}_{i=1}^{|\mathcal{X}|}$, $
s=1,\ldots, |\mathcal{S}| $. The conditional entropy $H(Y|T)$ is
given by
\begin{eqnarray} \nonumber
H(Y|T) \hspace{-8pt} & = & \hspace{-8pt} H(Y|X_1 X_2 \cdots
X_{|\mathcal{S}|}) \\ \hspace{-8pt} & = & \hspace{-8pt}
\sum_{i_1=1}^{|\mathcal{X}|} \cdots
\sum_{i_{|\mathcal{S}|}=1}^{|\mathcal{X}|} h_{i_1 \cdots
i_{|\mathcal{S}|}} p_{i_1 \cdots i_{|\mathcal{S}|}},
\end{eqnarray}
where $h_{i_1 \cdots i_{|\mathcal{S}|}} = H(Y|X_1 = x_{i_1},\ldots,
X_{|\mathcal{S}|}=x_{i_{|\mathcal{S}|}})$.

There are $|\mathcal{X}|^{|\mathcal{S}|}$ variables involved in the
maximization problem (\ref{capg}). Each variable represents the
probability of an input symbol of the associated channel. The
following theorem regards the number of nonzero variables required
to achieve the maximum in (\ref{capg}).
\begin{theorem} \label{th1}
The capacity of the associated channel is achieved by using at most
$|\mathcal{X}||\mathcal{S}|-|\mathcal{S}|+1$ out of
$|\mathcal{X}|^{|\mathcal{S}|}$ input symbols with nonzero
probabilities.
\end{theorem}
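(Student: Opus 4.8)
The plan is to turn the maximization (\ref{capg}) into the minimization of a linear functional over a polytope and then invoke the fact that such a minimum is attained at a vertex whose support is controlled by the number of equality constraints. The two structural facts already recorded are exactly what make this work: by (\ref{pY}), $H(Y)$ depends on the joint pmf $p_{i_1\cdots i_{|\mathcal{S}|}}$ only through the $|\mathcal{S}|$ marginals $p_{X_s}$, while $H(Y|T)=\sum_{i_1\cdots i_{|\mathcal{S}|}} h_{i_1\cdots i_{|\mathcal{S}|}}\,p_{i_1\cdots i_{|\mathcal{S}|}}$ is a \emph{linear} function of that joint pmf.

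First I would fix a capacity-achieving pmf $p^{\ast}$ and consider the set $\mathcal{P}_0$ of all joint pmfs on $\mathcal{T}$ whose marginals $p_{X_s}$, $s=1,\ldots,|\mathcal{S}|$, agree with those of $p^{\ast}$. This $\mathcal{P}_0$ is a nonempty compact polytope, and since $H(Y)$ is a function of the marginals alone it is \emph{constant} on $\mathcal{P}_0$. Hence, over $\mathcal{P}_0$, maximizing $I(T;Y)=H(Y)-H(Y|T)$ is equivalent to minimizing the linear functional $H(Y|T)$. A linear functional on a compact polytope attains its minimum at an extreme point $\tilde p$; because $p^{\ast}\in\mathcal{P}_0$ we get $I(\tilde p)=H(Y)-\min_{\mathcal{P}_0}H(Y|T)\ge H(Y)-H(Y|T)\big|_{p^{\ast}}=C$, so $\tilde p$ is itself capacity-achieving.

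It then remains to bound the support of an extreme point of $\mathcal{P}_0$. Writing $\mathcal{P}_0=\{p\ge 0:\ Ap=b\}$, where the rows of $A$ impose the marginal-matching equations $\sum_{t:\,t(s)=x_i}p(t)=p_{X_s}(x_i)$, a basic feasible solution has at most $\mathrm{rank}(A)$ nonzero coordinates. I would compute this rank by accounting for the redundancy among the constraints: the $|\mathcal{X}|$ marginal equations for a single state already force $\sum_t p(t)=1$, so the first state contributes $|\mathcal{X}|$ independent equations and each of the remaining $|\mathcal{S}|-1$ states contributes only $|\mathcal{X}|-1$ new ones. This yields $\mathrm{rank}(A)\le |\mathcal{X}|+(|\mathcal{S}|-1)(|\mathcal{X}|-1)=|\mathcal{X}||\mathcal{S}|-|\mathcal{S}|+1$, which is exactly the claimed bound on the number of nonzero probabilities.

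The main obstacle is the bookkeeping that produces the ``$+1$'' rather than ``$+2$''. The naive route---insisting that the reduced distribution reproduce both the marginals \emph{and} the value of $H(Y|T)$---adds one extra equality constraint and overshoots the target by one. The saving comes precisely from \emph{not} constraining $H(Y|T)$ directly, but instead exploiting that, with the marginals (hence $H(Y)$) held fixed, the objective is linear and therefore optimized at a vertex of $\mathcal{P}_0$; only the marginal constraints then govern the support. I would double-check the extreme-point support bound (equivalently, a Carath\'eodory/Fenchel--Eggleston argument applied to the vectors $(\mathbf{1}[t(s)=x_i])_{s,i}$) and verify the rank count against the redundancy between normalization and the per-state marginals.
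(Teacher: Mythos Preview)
Your proposal is correct and follows essentially the same route as the paper: fix a capacity-achieving pmf, restrict to joint pmfs with the same marginals so that $H(Y)$ is frozen, reduce to the linear program of minimizing $H(Y|T)$, and invoke the basic-feasible-solution support bound. Your explicit rank count (first state contributes $|\mathcal{X}|$ independent equations, each further state $|\mathcal{X}|-1$) is exactly the redundancy the paper alludes to when it says that only $|\mathcal{X}||\mathcal{S}|-|\mathcal{S}|+1$ of the $|\mathcal{X}||\mathcal{S}|$ marginal constraints are linearly independent.
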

\begin{proof}
Denote by $\{\hat{p}_{i}^{(s)}\}_{i=1}^{|\mathcal{X}|}
=\{\hat{p}_{X_s}(x_i)\}_{i=1}^{|\mathcal{X}|}$ the pmf of $X_s$,
$s=1,2,\ldots,|\mathcal{S}|$, induced by a capacity-achieving joint
pmf $\{\hat{p}_{i_1 \cdots i_{|\mathcal{S}|}}\}_{i_1, \ldots,
i_{|\mathcal{S}|}=1}^{|\mathcal{X}|}$. We limit the search for a
capacity-achieving joint pmf to those joint pmfs that yield the same
marginal pmfs as $\{\hat{p}_{i_1 \cdots i_{|\mathcal{S}|}}\}_{i_1,
\ldots, i_{|\mathcal{S}|=1}}^{|\mathcal{X}|}$. By limiting the
search to this smaller set, the maximum of $I(X_1\cdots
X_{|\mathcal{S}|};Y)$ remains unchanged since the capacity-achieving
joint pmf $\{\hat{p}_{i_1 \cdots i_{|\mathcal{S}|}}\}_{i_1, \ldots,
i_{|\mathcal{S}|}=1}^{|\mathcal{X}|}$ is in the smaller set. But all
joint pmfs in the smaller set yield the same $H(Y)$ since they
induce the same marginal pmfs on $X_1,\ldots,X_{|\mathcal{S}|}$.
Therefore, the maximization problem in (\ref{capg}) reduces to the
linear minimization problem
\begin{eqnarray} \nonumber \label{LP}
\min_{p_{i_1 \cdots i_{|\mathcal{S}|}}} \hspace{-8pt}& &
\hspace{-8pt} \sum_{i_1=1}^{|\mathcal{X}|} \cdots
\sum_{i_{|\mathcal{S}|}=1}^{|\mathcal{X}|} h_{i_1\cdots
i_{|\mathcal{S}|}} p_{i_1 \cdots i_{|\mathcal{S}|}}  \\ \nonumber
\mbox{s. t.} & & \\ \nonumber \hspace{-8pt} & &
\hspace{-8pt}\sum_{i_2=1}^{|\mathcal{X}|} \cdots
\sum_{i_{|\mathcal{S}|}=1}^{|\mathcal{X}|} p_{i_1 \cdots
i_{|\mathcal{S}|}} = \hat{p}_{i_1}^{(1)}, \hspace{5pt}
i_1=1,\ldots,|\mathcal{X}|, \\  \nonumber \hspace{-8pt} & &
\hspace{17pt} \vdots \hspace{150pt} \vdots \\ \nonumber
\hspace{-8pt} & & \hspace{-8pt} \sum_{i_1=1}^{|\mathcal{X}|} \cdots
\sum_{i_{{|\mathcal{S}|}-1}=1}^{|\mathcal{X}|} p_{i_1 \cdots
i_{|\mathcal{S}|}} =
\hat{p}_{i_{|\mathcal{S}|}}^{({|\mathcal{S}|})},
\hspace{5pt} i_{|\mathcal{S}|}=1,\ldots,|\mathcal{X}|,\\
\hspace{-8pt}& & \hspace{-8pt} p_{i_1 \cdots i_{|\mathcal{S}|}} \geq
0, \hspace{10pt} i_1,
\ldots,i_{|\mathcal{S}|}=1,2,\ldots,|\mathcal{X}|.
\end{eqnarray}
There are $|\mathcal{X}||\mathcal{S}|$ equality constraints in
(\ref{LP}) out of which $|\mathcal{X}||\mathcal{S}|-|\mathcal{S}|+1$
are linearly independent. From the theory of linear programming, the
minimum of (\ref{LP}), and hence the maximum of $I(X_1 \cdots
X_{|\mathcal{S}|};Y)$, is achieved by a feasible solution with at
most $|\mathcal{X}||\mathcal{S}|-|\mathcal{S}|+1$ nonzero variables.
\end{proof}
Theorem \ref{th1} states that at most
$|\mathcal{X}||\mathcal{S}|-|\mathcal{S}|+1$ out of
$|\mathcal{X}|^{|\mathcal{S}|}$ input symbols of the associated
channel are needed to be used with positive probability to achieve
the capacity. However, in general one does not know which of the
inputs must be used to achieve the capacity. If we knew the marginal
pmfs for $X_1,\ldots,X_{|\mathcal{S}|}$ induced by a
capacity-achieving joint pmf, we could obtain the capacity-achieving
joint pmf itself by solving the linear program (\ref{LP}).

\section{The Channel Model} \label{chanmodel}
We consider data transmission over the channel
\begin{equation} \label{chmodel}
Y = X + S + N,
\end{equation}
where $ X $ is the channel input, which takes on values in a fixed
real constellation
\begin{equation}
\mathcal{X} = \left \{ x_1, x_2,\ldots, x_M \right\},
\end{equation}
$ Y $ is the channel output, $ N $ is additive white Gaussian noise
with power $ P_N $, and the interference $ S $ is a discrete random
variable that takes on values in
\begin{equation}
\mathcal{S}=\left\lbrace s_1, s_2, \ldots, s_Q \right\rbrace
\end{equation}
with probabilities $r_1,r_2, \ldots, r_Q$, respectively. The
sequence of i.i.d. interference symbols is known causally at the
encoder. The above channel can be considered as a special case of
state-dependent channels considered by Shannon with one exception,
that the channel output alphabet is continuous. In our case, the
likelihood function $ f_{Y|X,S}(y|x,s) $ is used instead of the
transition probabilities. We denote the input to the associated
channel by $ T $, which can also be represented as $
(X_1,X_2,\ldots,X_Q) $, where $ X_j $ is the random variable that
represents the channel input when the current interference symbol is
$s_j$, $j=1,\ldots,Q$.

The likelihood function for the associated channel is given by
\begin{eqnarray} \nonumber
f_{Y|T}(y|t) & = & \sum_{j=1}^{Q} r_j f_{Y|X,S}(y|x_{i_{j}},s_j)\\
& = & \sum_{j=1}^{Q} r_j f_{N}(y-x_{i_{j}}-s_j),
\end{eqnarray}
where $ f_N $ denotes the pdf of the noise $ N $, and $t$ is the
input symbol of the associated channel represented by
$(x_{i_{1}},x_{i_{2}}, \ldots, x_{i_{Q}})$.

According to theorem \ref{th1}, the capacity of our channel is
obtained by using at most $MQ-Q+1$ out of $M^Q$ input symbols of the
associated channel.

\section{The Noise-Free Channel} \label{nf}
We consider a special case where the noise power is zero in
(\ref{chmodel}). In the absence of noise, the channel output $Y$
takes on at most $MQ$ different values since different $X$ and $S$
pairs may yield the same sum. If $Y$ takes on exactly $MQ$ different
values, then it is easy to see that the capacity is $\log_2 M$ bits
\footnote{This is true even if the interference sequence is unknown
to the encoder.}: The decoder just needs to partition the set of all
possible channel output values into $M$ subsets of size $Q$
corresponding to $M$ possible inputs, and decide that which subset
the current received symbol belongs to.

In general, where the cardinality of the channel output symbols can
be less than $MQ$, we will show that under some condition on the
channel input alphabet there exists a coding scheme that achieves
the rate $\log_2 M$ in one use of the channel. We do this by
considering a one-shot coding scheme which uses only $M$ (out of
$M^Q$) inputs of the associated channel.

In a one-shot coding scheme, a message is encoded to a single input
of the associated channel. Any input of the associated channel can
be represented by a $Q$-tuple composed of elements of $\mathcal{X}$.
Given that the current interference symbol is $s_j$, the $j$th
element of the $Q$-tuple is sent through the channel. Therefore, one
single message can result in (up to) $Q$ symbols at the output. For
convenience, we consider the output symbols corresponding to a
single message as a multi-set\footnote{A multi-set differs from a
set in that each member may have a multiplicity greater than one.
For example, $\{1,3,3,7\}$ is a multi-set of size four where 3 has
multiplicity two.} of size (exactly) $Q$. If the $M$ multi-sets at
the output corresponding to $M$ different messages are mutually
disjoint, reliable transmission through the channel is possible.

Unfortunately, we cannot always find $M$ inputs of the associated
channel such that the corresponding multi-sets are mutually
disjoint. For example, consider a channel with the input alphabet
$\mathcal{X}=\{0,1,2,4\}$ and the interference alphabet
$\mathcal{S}=\{0,1,3\}$. It is easy to check that for this channel
we cannot find four triples composed of elements of $\mathcal{X}$
such that the corresponding multi-sets are mutually disjoint. In
fact, by entropy calculations we can show that the capacity of the
channel in this example is less than $2$ bits.

However, if we put some constraint on the channel input alphabet,
the rate $\log_2 M$ is achievable.

\begin{theorem} \label{th2}
Suppose that the elements of the channel input alphabet
$\mathcal{X}$ form an arithmetic progression. Then the capacity of
the noise-free channel
\begin{equation} \label{noise-free}
Y = X + S,
\end{equation}
where the sequence of interference symbols is known causally at the
encoder equals $\log_2 M$ bits.
\end{theorem}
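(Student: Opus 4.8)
The plan is to prove matching bounds. Since the physical input $X$ takes only $M$ values, I expect the converse $C \le \log_2 M$, and the discussion preceding the theorem already shows that the rate $\log_2 M$ is achievable in one shot as soon as one can exhibit $M$ inputs of the associated channel whose output multi-sets are mutually disjoint. Thus I would reduce the theorem to (i) a converse and (ii) an explicit construction of such $M$ inputs that exploits the arithmetic-progression structure of $\mathcal{X}$.

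For the converse I would start from $C=\max_{p(t)} I(T;Y)$ and use that in Shannon's model the strategy variable $T$ is independent of the state $S$. Then $I(T;Y)\le I(T;Y,S)=I(T;S)+I(T;Y|S)=I(T;Y|S)$. Because the channel is noise-free, $Y$ is a deterministic function of $(T,S)$, so $H(Y|T,S)=0$ and $I(T;Y|S)=H(Y|S)=\sum_{j} r_j H(X_j+s_j)$. Each term equals $H(X_j)\le\log_2 M$ since $X_j$ ranges over $\mathcal{X}$, giving $C\le\log_2 M$ for every input pmf.

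For achievability I would first normalize by the affine bijection $u\mapsto (u-x_1)/d$, where $d$ is the common difference; this preserves distinctness of real numbers, so without loss of generality $\mathcal{X}=\{0,1,\ldots,M-1\}$ and the outputs available under state $s_j$ lie in $s_j+\{0,\ldots,M-1\}$. The key observation is that two outputs $c_j+s_j$ and $c_{j'}+s_{j'}$ can coincide only when $s_j-s_{j'}\in\mathbb{Z}$, so I would group the states by the fractional part of $s_j$, outputs from different groups never colliding. Within a group, writing $s_j=\phi+n_j$ with $n_j\in\mathbb{Z}$, I would assign to message $m\in\{0,\ldots,M-1\}$ the associated-channel input whose $j$-th coordinate is $c_j^{(m)}=(m-n_j)\bmod M$; then every output of message $m$ equals $\phi$ plus an integer congruent to $m$ modulo $M$.

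The verification is then routine: for $m\neq m'$, the outputs either sit in different fractional-part classes (hence distinct) or, within one class, in distinct residue classes modulo $M$, so the $M$ multi-sets are mutually disjoint and the one-shot scheme using these $M$ inputs transmits $\log_2 M$ bits with zero error. I expect the main obstacle to be exactly this construction for \emph{arbitrary real} interference: the clean modular assignment works verbatim only when the $s_j$ are integer multiples of $d$, and the fractional-part grouping is what reduces the general case to that situation. Confirming that cross-class collisions are impossible and that each $c_j^{(m)}$ lies in $\{0,\ldots,M-1\}$ is the step that genuinely uses the arithmetic-progression hypothesis, as the earlier $\mathcal{X}=\{0,1,2,4\}$ example shows it must fail without it.
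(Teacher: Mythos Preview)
Your proposal is correct and takes a genuinely different route from the paper. The paper proves achievability by induction on the number $Q$ of interference levels: assuming $M$ mutually disjoint multi-sets of size $q$ have been built from $\mathcal{Y}^{(1)},\ldots,\mathcal{Y}^{(q)}$, it shows how to enlarge each by one element of $\mathcal{Y}^{(q+1)}$, splitting into the case where $\mathcal{Y}^{(q+1)}$ is disjoint from the earlier sets and the case where it overlaps, the latter handled by locating a single $\mathcal{Y}^{(j)}$ that absorbs all the overlap. Your argument instead normalizes to $\mathcal{X}=\{0,\ldots,M-1\}$, partitions the states by the fractional part of $s_j$, and within each class uses the explicit assignment $c_j^{(m)}=(m-n_j)\bmod M$; disjointness then follows in one line from residues modulo $M$ within a class and from distinct fractional parts across classes. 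Your construction is more explicit (a closed-form code rather than an inductive existence proof), makes the role of the arithmetic-progression hypothesis completely transparent, and handles arbitrary real interference values cleanly via the fractional-part device. The paper's inductive argument, by contrast, is constructive only in the sense of an algorithm and requires a somewhat delicate geometric observation in its Case~2. You also supply a clean converse $I(T;Y)\le I(T;Y\mid S)=H(Y\mid S)\le\log_2 M$, which the paper leaves implicit.
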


\begin{proof}
Let $\mathcal{Y}^{(q)}$ be the set of all possible outputs of the
noise-free channel when the interference symbol is $s_q$, i.e.,
\begin{equation}
\mathcal{Y}^{(q)} = \left\{x_1+s_q, x_2+s_q, \ldots,
x_M+s_q\right\}, \quad q = 1, \ldots, Q.
\end{equation}
The union of $\mathcal{Y}^{(q)}$s is the set of all possible outputs
of the noise-free channel.

Without loss of generality we can assume that $s_1 < s_2 < \cdots
<s_Q$. The elements of $\mathcal{Y}^{(q)}$ form an arithmetic
progression, $q = 1, \ldots, Q$. Furthermore, these $Q$ arithmetic
progressions are shifted versions of each other.

We prove by induction on $Q$ that there exist $M$ mutually-disjoint
multi-sets of size $Q$ composed of the elements of
$\mathcal{Y}^{(1)}, \mathcal{Y}^{(2)}, \ldots, \mathcal{Y}^{(Q)}$
(one element from each). If we can find such $M$ multi-sets of size
$Q$, then we can obtain the corresponding $M$ $Q$-tuples of elements
of $\mathcal{X}$ by subtracting the corresponding interference terms
from the elements of the multi-sets. These $M$ $Q$-tuples can serve
as the inputs of the associated channel to be used for sending any
of $M$ distinct messages through the channel without error in one
use of the channel, hence achieving the rate $\log_2 M$ bits per
channel use.

For $Q=1$, the statement of the theorem is true since we can take
$\{x_1+s_1 \}, \{x_2+s_1 \}, \ldots, \{x_M+s_1 \}$ as
mutually-disjoint sets of size one.

Assume that there exist $M$ mutually-disjoint multi-sets of size
$Q=q$. For $Q=q+1$, we will have the new set of channel outputs
$\mathcal{Y}^{(q+1)}=\{x_1+s_{q+1}, x_2+s_{q+1}, \ldots, x_M+s_{q+1}
\}$. We consider two possible cases:

\textit{Case 1}: None of the elements of $\mathcal{Y}^{(q +1)}$
appear in any of the multi-sets of size $Q=q$.

In this case, we include the elements of $\mathcal{Y}^{(q+1)}$ in
the $M$ multi-sets  arbitrarily (one element is included in each
multi-set). It is obvious that the resulting multi-sets of size
$Q=q+1$ are mutually disjoint.

\textit{Case 2}: Some of the elements of $\mathcal{Y}^{(q +1)}$
appear in some of the multi-sets of size $Q=q$.

Suppose that the largest element of $\mathcal{Y}^{(q +1)}$ which
appears in any of the sets $\mathcal{Y}^{(1)},\ldots$,
$\mathcal{Y}^{(q)}$ (or equivalently, in any of the multi-sets of
size $Q=q$) is $x_k + s_{q+1}$ for some $1 \leq k \leq M-1$. Then
since $\mathcal{Y}^{(q +1)}$ is shifted version of each
$\mathcal{Y}^{(1)},\ldots, \mathcal{Y}^{(q)}$ and $s_{q+1} > s_q >
\cdots > s_1$, exactly one of the sets $\mathcal{Y}^{(1)},\ldots,
\mathcal{Y}^{(q)}$, say $\mathcal{Y}^{(j)}$ for some $1 \leq j \leq
q$, contains all elements of $\mathcal{Y}^{(q +1)}$ up to
$x_k+s_{q+1}$. See fig. \ref{demo}. Since any of the disjoint
multi-sets of size $Q$ contain just one element of
$\mathcal{Y}^{(j)}$, the elements of $\mathcal{Y}^{(q +1)}$ up to
$x_k+s_{q+1}$ appear in different multi-sets of size $Q=q$. We can
form the disjoint multi-sets of size $q+1$ by including these common
elements in the corresponding multi-sets and including the elements
of $\{x_{k+1}+s_{q+1}, \ldots, x_M+s_{q+1} \}$ in the remaining
multi-sets arbitrarily.
\end{proof}
\begin{figure}[t]
\centering
\includegraphics[scale = 0.55]{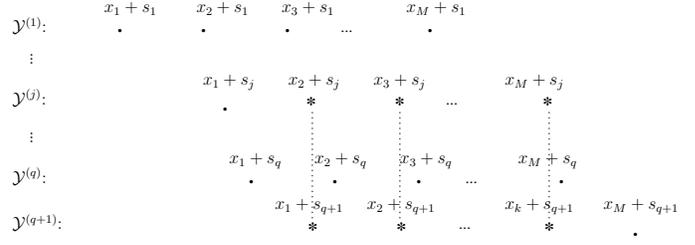}
\caption{The elements of $\mathcal{Y}^{(1)},\ldots,
\mathcal{Y}^{(q+1)}$ shown as shifted version of each other. The
elements of $\mathcal{Y}^{(q+1)}$ up to $x_k+s_{q+1}$ appear in
$\mathcal{Y}^{(j)}$.} \label{demo}
\end{figure}

The condition on the channel input alphabet in the statement of
theorem \ref{th2} is a sufficient condition for the channel capacity
to be $\log_2 M$. However, it is not a necessary condition. For
example, the statement of theorem \ref{th2} without that condition
is true for the case of $Q=2$. Because in the second iteration, we
do not need the arithmetic progression condition to form $M$
mutually-disjoint multi-sets of size two.

The proof of theorem \ref{th2} is actually a constructive algorithm
for finding $M$ (out of $M^Q$) inputs of the associated channel to
be used with probability $\frac{1}{M}$ to achieve the rate $\log_2
M$ bits.

It is interesting to see that the set containing the $q$th elements
of the $M$ $Q$-tuples obtained by the constructive algorithm is
$\mathcal{X}$, $q=1,\ldots,Q$. This is due to the fact that each
multi-set contains one element from each $\mathcal{Y}^{(1)},\ldots,
\mathcal{Y}^{(Q)}$. Therefore, a uniform distribution on the $M$
$Q$-tuples induces uniform distribution on $X_1, \ldots,X_Q$.

\section{Uniform Transmission}\label{uniform}
In the sequel, we study the maximization of the rate $I(X_1\cdots
X_Q;Y)$ over joint pmfs $\{p_{i_1 \cdots i_Q} \}_{i_1,\ldots,i_Q =
1}^M$ that induce uniform marginal distributions on
$X_1,\ldots$,$X_Q$, i.e.,
\begin{equation} \label{margin}
p_{i}^{(1)} = p_{i}^{(2)} = \cdots = p_{i}^{(Q)} = \frac{1}{M},
\hspace{20pt} i = 1, 2, \ldots, M,
\end{equation}
for which we show how to obtain the optimal input probability
assignment. We call a transmission scheme that induces uniform
distribution on $X_1, \ldots,X_Q$ as \emph{uniform transmission}.
The uniform distribution for $X_1, \ldots,X_Q$ implies uniform
distribution for $X$, the input to the state-dependent channel
defined in (\ref{chmodel}).

In the previous section, we established that the capacity achieving
pmf for the asymptotic case of noise-free channel induces uniform
distributions on $X_1, \ldots,X_Q$ (provided that we can find $M$
$Q$-tuples such that the corresponding multi-sets are mutually
disjoint).

Considering the constraints in (\ref{margin}), the maximization of
$I(X_1\cdots X_Q;Y)$ is reduced to the linear minimization problem
\begin{eqnarray} \nonumber \label{LPUN}
\min_{p_{i_1 \cdots i_Q}}& & \sum_{i_1=1}^{M} \cdots
\sum_{i_Q=1}^{M} h_{i_1\cdots i_Q} p_{i_1 \cdots i_Q}  \\ \nonumber
\mbox{s. t.} & &                                  \\ \nonumber & &
\sum_{i_2=1}^{M} \cdots \sum_{i_Q=1}^{M} p_{i_1 \cdots i_Q}  =
\frac{1}{M}, \hspace{25pt} i_1=1,\ldots,M,  \\
\nonumber & & \hspace{25pt} \vdots \hspace{150pt} \vdots         \\
\nonumber & & \sum_{i_1=1}^{M} \cdots \sum_{i_{Q-1}=1}^{M} p_{i_1
\cdots i_Q} = \frac{1}{M}, \hspace{12pt} i_Q=1,\ldots,M,\\
& & p_{i_1 \cdots i_Q} \geq 0, \hspace{25pt} i_1,
\ldots,i_Q=1,2,\ldots,M.
\end{eqnarray}
The same argument used in the last part of the proof of theorem
\ref{th1} can be used to show that the maximum is achieved by using
at most $MQ-Q+1$ inputs of the associated channel with positive
probabilities. This is restated in the following corollary.

\begin{corollary} \label{coro1}
The maximum of $I(X_1 \cdots X_Q;Y)$ over joint pmfs $\{p_{i_1
\cdots i_Q} \}_{i_1,\ldots,i_Q = 1}^M$ that induce uniform marginal
distributions on $X_1,X_2,\ldots,X_Q$ is achieved by a joint pmf
with at most $MQ-Q+1$ nonzero elements.
\end{corollary}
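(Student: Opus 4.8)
The plan is to reuse, essentially verbatim, the linear-programming argument from the last part of the proof of Theorem~\ref{th1}; the only change is that the marginal constraints are now frozen at the uniform values $1/M$ specified in~(\ref{margin}) rather than at the marginals induced by a capacity-achieving joint pmf.

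First I would invoke~(\ref{pY}) to note that $H(Y)$ depends on the joint pmf $\{p_{i_1\cdots i_Q}\}$ only through the induced marginals $\{p_{X_s}\}_{s=1}^{Q}$. Every joint pmf in the feasible set of~(\ref{LPUN}) induces the same uniform marginals, so $H(Y)$ is one and the same constant across that entire set. Decomposing $I(X_1\cdots X_Q;Y)=H(Y)-H(Y|T)$ and recalling that $H(Y|T)=\sum_{i_1\cdots i_Q}h_{i_1\cdots i_Q}\,p_{i_1\cdots i_Q}$ is linear in the joint pmf, maximizing the mutual information over the feasible set is exactly the linear minimization problem~(\ref{LPUN}).

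Next I would count constraints. Problem~(\ref{LPUN}) has $MQ$ equality constraints---$M$ for each of the $Q$ marginals---plus nonnegativity, and its feasible region is a nonempty bounded polytope. Crucially, its constraint matrix is identical to that of~(\ref{LP}); only the right-hand side has changed from $\hat{p}^{(s)}_{i}$ to $1/M$, so the rank is unchanged. Summing the $M$ constraints of any one marginal group yields the total-mass equation $\sum_{i_1\cdots i_Q}p_{i_1\cdots i_Q}=1$, and since all $Q$ groups produce this same equation, exactly $Q-1$ of the $MQ$ constraints are redundant, leaving $MQ-Q+1$ linearly independent ones. Because a linear objective over a bounded polytope attains its minimum at a vertex, and a basic feasible solution has at most as many nonzero coordinates as there are linearly independent equality constraints, the optimum of~(\ref{LPUN})---hence the maximum of $I(X_1\cdots X_Q;Y)$ under~(\ref{margin})---is attained with at most $MQ-Q+1$ positive $p_{i_1\cdots i_Q}$, which is the assertion.

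Since the entire argument is a transcription of Theorem~\ref{th1} with the right-hand side replaced by the uniform vector, I do not expect a genuine obstacle; the one step carrying real content is the rank count, i.e.\ confirming that the $Q-1$ total-mass redundancies are the \emph{only} dependencies so that the rank is exactly $MQ-Q+1$. This verification is shared with Theorem~\ref{th1} and transfers unchanged, because freezing the marginals to the uniform values affects only the right-hand side and not the constraint matrix, and because $H(Y)$ remains constant over the feasible set---being a function of the marginals alone---so the reduction to~(\ref{LPUN}) is legitimate.
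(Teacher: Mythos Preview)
Your proposal is correct and is precisely the approach the paper takes: it states explicitly that ``the same argument used in the last part of the proof of theorem~\ref{th1}'' yields the corollary, with the marginal constraints specialized to the uniform values in~(\ref{margin}). Your elaboration of the constant-$H(Y)$ observation and the $MQ-Q+1$ rank count matches what the paper relies on implicitly.
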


This result is independent of the coefficients $\{h_{i_1\cdots
i_Q}\}$. However, which probability assignment with at most $MQ-Q+1$
nonzero elements is optimal depends on the coefficients $\{ h_{i_1
\cdots i_Q} \} $. The coefficient $h_{i_1 \cdots i_Q}$ is determined
by the interference levels $s_1, \ldots, s_Q$, the probability of
interference levels $r_1, \ldots, r_Q$, the noise power $P_N$, and
the signal points $x_1,x_2, \ldots, x_M$. The optimal probability
assignment is obtained by solving the linear programming problem
(\ref{LPUN}) using the simplex method \cite{Kreko}.

\subsection{Two-Level Interference}
If the number of interference levels is two, i.e., $Q=2$, we can
make a stronger statement than corollary \ref{coro1}.
\begin{theorem} \label{th3}
The maximum of $ I(X_1 X_2;Y) $ over $
\{p_{i_1i_2}\}_{i_1,i_2=1}^{M} $ with uniform marginal pmfs for
$X_1$ and $X_2$ is achieved by using exactly $M$ out of $M^2$ inputs
of the associated channel with probability $\frac{1}{M}$.
\end{theorem}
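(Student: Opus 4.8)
The plan is to recognize the feasible region of the linear program (\ref{LPUN}) in the case $Q=2$ as a scaled copy of the set of doubly stochastic matrices (the Birkhoff polytope). For $Q=2$ the joint pmf $\{p_{i_1 i_2}\}_{i_1,i_2=1}^{M}$ is naturally arranged as an $M\times M$ matrix $P$ whose $(i_1,i_2)$ entry is $p_{i_1 i_2}$. The uniform-marginal constraints in (\ref{LPUN}) state that every row sum and every column sum of $P$ equals $\frac{1}{M}$. Hence $D=MP$ is a nonnegative matrix all of whose row sums and column sums equal $1$, i.e. $D$ is doubly stochastic; conversely every doubly stochastic $D$ yields a feasible $P=\frac{1}{M}D$. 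So the feasible set for $P$ is exactly $\frac{1}{M}$ times the Birkhoff polytope.

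First I would rewrite the objective of (\ref{LPUN}) for $Q=2$ as $\sum_{i_1,i_2} h_{i_1 i_2}\, p_{i_1 i_2}=\frac{1}{M}\sum_{i_1,i_2} h_{i_1 i_2}\, D_{i_1 i_2}$, which is a linear functional of $D$. A linear functional over a polytope attains its minimum at a vertex, and by the Birkhoff--von Neumann theorem the vertices of the Birkhoff polytope are precisely the $M\times M$ permutation matrices. Therefore there is an optimal $D$ that is a permutation matrix, having exactly $M$ nonzero entries, each equal to $1$, with precisely one nonzero entry in every row and every column.

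Translating back, the corresponding optimal joint pmf $P=\frac{1}{M}D$ has exactly $M$ nonzero entries, each equal to $\frac{1}{M}$, and its single nonzero entry per row and per column is automatically consistent with the uniform marginals imposed on $X_1$ and $X_2$. Note also that the count cannot be smaller: since each of the $M$ row sums equals the nonzero value $\frac{1}{M}$, every row must contain at least one nonzero entry, so at least $M$ entries are nonzero. This establishes the claim that the maximum of $I(X_1 X_2;Y)$ is achieved by using exactly $M$ out of $M^2$ inputs of the associated channel, each with probability $\frac{1}{M}$.

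The one point requiring care is the structural input from convex/LP theory: that the minimum of a linear objective over this constraint polytope is attained at a vertex, together with the identification of those vertices as permutation matrices via Birkhoff--von Neumann. Once that is in place, the reduction of the $Q=2$ constraints to double stochasticity and the counting of nonzero entries are routine, and in particular this argument is what sharpens the generic bound $MQ-Q+1=2M-1$ from Corollary \ref{coro1} down to exactly $M$ in the two-level case.
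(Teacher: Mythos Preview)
Your argument is correct and is essentially equivalent to the paper's, though packaged through a different named result. The paper writes the equality constraints as $\mathbf{A}\mathbf{p}=\mathbf{1}$, observes that for $Q=2$ the matrix $\mathbf{A}$ is the vertex--edge incidence matrix of the complete bipartite graph $K_{M,M}$, invokes total unimodularity of $\mathbf{A}$ to conclude that all extreme points of the feasible region are integral, and hence that an optimal $\mathbf{p}^*$ is a $0$--$1$ vector with exactly $M$ ones. You instead recognize the feasible set directly as a scaled Birkhoff polytope and invoke the Birkhoff--von~Neumann theorem to identify the vertices as permutation matrices. These two routes are two faces of the same fact (indeed, total unimodularity of bipartite incidence matrices is one standard proof of Birkhoff--von~Neumann), so neither is more general; your formulation is perhaps a bit more direct since it bypasses the identification of $\mathbf{A}$ with a graph incidence matrix. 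Your additional remark that fewer than $M$ nonzero entries is impossible (since each row sum is $\tfrac{1}{M}>0$) makes the ``exactly $M$'' explicit, which the paper leaves implicit.
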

\begin{proof}
The equality constraints of (\ref{LPUN}) can be written in matrix
form as
\begin{equation} \label{matrix}
\mathbf{A}\mathbf{p} = \mathbf{1},
\end{equation}
where $\mathbf{A}$ is a zero-one $MQ \times M^Q$ matrix,
$\mathbf{p}$ is $M$ times the vector containing all $p_{i_1 \cdots
i_Q}$s in lexicographical order, and $\mathbf{1}$ is the all-one $MQ
\times 1$ vector.

For $Q=2$, it is easy to check that $\mathbf{A}$ is the vertex-edge
incidence matrix of $K_{M,M}$, the complete bipartite graph with $M$
vertices at each part. Therefore, $\mathbf{A}$ is a totally
unimodular matrix\footnote{A totally unimodular matrix is a matrix
for which every square submatrix has determinant $0$,$1$, or $-1$.}
\cite{Nemhauser88}. Hence, the extreme points of the feasible region
$ F = \left\lbrace \mathbf{p}: \mathbf{Ap}=\mathbf{1}, \mathbf{p}
\geq \mathbf{0} \right\rbrace $ are integer vectors. Since the
optimal value of a linear optimization problem is attained at one of
the extreme points of its feasible region, the minimum in
(\ref{LPUN}) is achieved at an all-integer vector $\mathbf{p}^*$.
Considering that $\mathbf{p}^*$ satisfies (\ref{matrix}), it can
only be a zero-one vector with exactly $M$ ones.
\end{proof}

\begin{figure}[t]
\centering
\includegraphics[scale = 0.5]{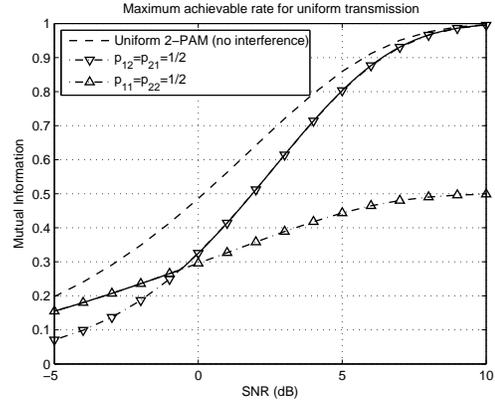}
\caption{Maximum mutual information vs. SNR for the channel with
$\mathcal{X} = \mathcal{S} = \{-1,+1\}$ and $r_1 = r_2 =
\frac{1}{2}$.} \label{MI1}
\end{figure}

Fig. \ref{MI1} depicts the maximum mutual information (for the
uniform transmission scenario) vs. SNR for the channel with
$\mathcal{X} = \mathcal{S} = \{-1,+1\}$ and equiprobable
interference symbols. The mutual information vs. SNR curve for the
interference-free AWGN channel with equiprobable input alphabet
$\{-1,+1\}$ is plotted for comparison purposes. As it can be seen,
for low SNRs, the input probability assignment $p_{11} = p_{22} =
\frac{1}{2}$ is optimal, whereas at high SNRs, the input probability
assignment $p_{12} = p_{21} = \frac{1}{2}$ is optimal. The maximum
achievable rate for uniform transmission is the upper envelope of
the two curves corresponding to different input probability
assignments. Also, it can be observed that the achievable rate
approaches $\log_2 2 = 1$ bit per channel use as SNR increases
complying with the fact that we established in section \ref{nf} for
the noise-free channel.

It turns out from the proof of theorem \ref{th3} that the optimum
solution of the linear optimization problem, $\mathbf{p}^*$, is a
zero-one vector. So, if we add the integrality constraint to the set
of constraints in (\ref{matrix}), we still obtain the same optimal
solution. The resulting integer linear optimization problem is
called the \emph{assignment problem} \cite{Nemhauser88}, which can
be solved using low-complexity algorithms such as the
\emph{Hungarian method} \cite{Kreko}.

\subsection{Integrality Constraint for the $Q$-Level Interference}
The fact that for the case $Q=2$, there exists an optimal
$\mathbf{p}$ which is a zero-one vector with exactly $M$ ones
simplifies the encoding operation. Because any encoding scheme just
needs to work on a subset of size $M$ of the associated channel
input alphabet with equal probabilities $\frac{1}{M}$.

For $Q\neq2$, $\mathbf{A}$ is not a totally unimodular matrix.
Therefore, not all extreme points of the feasible region defined by
$\mathbf{A}\mathbf{p} = \mathbf{1}, \mathbf{p} \geq \mathbf{0}$, are
integer vectors. However, at the expense of possible loss in rate,
we may add the integrality constraint in this case. The resulting
optimization problem is called the \emph{multi-dimensional
assignment problem} \cite{MAP}. The optimal solution of (\ref{LPUN})
with the integrality constraint, will be a vector with exactly $M$
nonzero elements with the value $\frac{1}{M}$. Therefore, any
encoding scheme just needs to use $M$ symbols of the associated
channel with equal probabilities, simplifying the encoding
operation.

\section{Optimal Precoding} \label{optprecode}
The general structure of a communication system for the channel
defined in (\ref{chmodel}) is shown in fig. \ref{fig6}. Any encoding
and decoding scheme for the associated channel can be translated to
an encoding and decoding scheme for the original channel defined in
(\ref{chmodel}). A message $w$ is encoded into a block of length $n$
composed of input symbols of the associated channel $t \sim
(x_{i_1},x_{i_2},\ldots,x_{i_Q})$. There are $M^Q$ input symbols.
However, we showed that the maximum rate with uniformity and
integrality constraints can be achieved by using just $M$ input
symbols of the associated channel with equal probabilities. The
optimal $M$ input symbols of the associated channel are obtained by
solving the linear programming problem (\ref{LPUN}) with the
integrality constraint. Those $M$ input symbols of the associated
channel define the optimal precoding operation: For any $t$ that
belongs to the set of $M$ optimal input symbols, the precoder sends
the $q$th component of $t$ if the current interference symbol is
$s_q$, $q=1,\ldots,Q$. Based on the received sequence, the receiver
decodes $\hat{w}$ as the transmitted message.
\begin{figure}[tbp]
\centering
\includegraphics[scale=0.7]{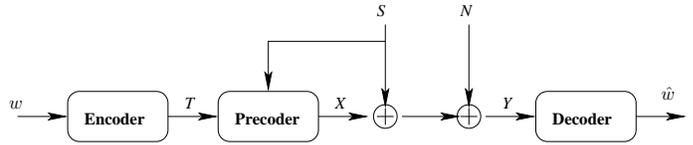}
\caption{General structure of the communication system for channels
with causally-known discrete interference.} \label{fig6}
\end{figure}

\section{Conclusion} \label{conclude}
In this paper, we proved that the capacity of an SD-DMC with finite
input alphabet $\mathcal{X}$ and finite state alphabet $\mathcal{S}$
and with causally known i.i.d. state sequence at the encoder can be
achieved by using at most
$|\mathcal{X}||\mathcal{S}|-|\mathcal{S}|+1$ out of
$|\mathcal{X}|^{|\mathcal{S}|}$ input symbols of the associated
channel. As an example of state-dependent channels with side
information at the encoder, we investigated $M$-ary signal
transmission over AWGN channel with additive $Q$-level interference,
where the sequence of interference symbols is known causally at the
transmitter.

For the noise-free channel, provided that the signal points are
equally spaced, we proposed a one-shot coding scheme that uses $M$
input symbols of the associated channel to achieves the capacity
$\log_2 M$ bits.

We considered the transmission schemes with uniform pmfs for
$X_1,\ldots,X_Q$. For this so called uniform transmission, the
optimal input probability assignment with at most $MQ-Q+1$ nonzero
elements can be obtained by solving the linear optimization problem
(\ref{LPUN}). The optimal solution to (\ref{LPUN}) with the
integrality constraint has exactly $M$ nonzero elements. For the
case $Q=2$, we showed that the integrality constraint does not
reduce the maximum achievable rate. The loss in rate (if there is
any) by imposing the integrality constraint for the general case is
a problem to be explored.

\section*{Acknowledgment}
The authors would like to thank Gerhard Kramer and Syed Ali Jafar
for pointing out that the proof of theorem \ref{th1}, which was
originally given for the channel model considered in this paper,
works for any SD-DMC with causal side information at the encoder.

\end{document}